\let\NAT@parse\undefined
\theoremstyle{plain}
\newtheorem{thm}{Theorem}
\newtheorem{lem}{Lemma}
\theoremstyle{remark}
\newtheorem{rem}{Remark}
\theoremstyle{definition}
\newtheorem{definition}{Definition}
\newtheorem{assumption}{Assumption}
\DeclareMathAlphabet{\pazocal}{OMS}{zplm}{m}{n}
\DeclareMathOperator*{\argmin}{argmin}
\newcommand\copyrighttext{%

	\footnotesize \textcopyright 2025 IEEE. Personal use of this material is permitted.  Permission from IEEE must be obtained for all other uses, in any current or future media, including reprinting/republishing this material for advertising or promotional purposes, creating new collective works, for resale or redistribution to servers or lists, or reuse of any copyrighted component of this work in other works.}
\newcommand\copyrightnotice{%

	\begin{tikzpicture}[remember picture,overlay]

		\node[anchor=south,yshift=5pt] at (current page.south) {\fbox{\parbox{\dimexpr\textwidth-\fboxsep-\fboxrule\relax}{\copyrighttext}}};

	\end{tikzpicture}%

}
\title{\LARGE \bf A Set-Theoretic Robust Control Approach for Linear Quadratic Games with Unknown Counterparts}
\author{Francesco Bianchin$^{1}$, Robert Lefringhausen$^{1}$, Elisa Gaetan$^{2}$, Samuel Tesfazgi$^{1}$, Sandra Hirche$^{1}$%
	\thanks{$^{1}$School of Computation, Information and Technology, Technical University of Munich, Germany
		{\tt\small \{francesco.bianchin, robert.lefringhausen, samuel.tesfazgi, sandra.hirche\}@tum.de} }
	\thanks{$^{2}$Engineering Department ’Enzo Ferrari’, University of Modena and Reggio Emilia, Modena, Italy. 
		{\tt\small elisa.gaetan@unimore.it} }
}
\begin{document}
\maketitle
\copyrightnotice
\thispagestyle{empty}
\pagestyle{empty}

\begin{abstract}
Ensuring robust decision-making in multi-agent systems is challenging when agents have distinct, possibly conflicting objectives and lack full knowledge of each other’s strategies. This is clear in safety-critical applications such as human-robot interaction and assisted driving, where uncertainty arises not only from unknown adversary strategies but also from external disturbances. To address this, the paper proposes a robust adaptive control approach based on linear quadratic differential games. The method allows a controlled agent to iteratively refine its belief about the adversary’s strategy and disturbances using a set-membership approach, while simultaneously adapting its policy to guarantee robustness against the uncertain adversary policy and improve performance over time. We formally derive theoretical guarantees on the robustness of the proposed control scheme and its convergence to $\epsilon$-Nash strategies. The effectiveness of our approach is demonstrated in numerical simulations.
\end{abstract}

\vspace{0cm}
\section{INTRODUCTION}
Many real-world scenarios involve multiple agents with distinct objectives interacting in a shared system. Game theory provides a structured framework for analyzing these strategic interactions over time \cite{isaacs1965differential, Starr1969}, with applications in human-robot interaction \cite{MUSIC202010216, Li_HRI} and assisted driving \cite{JAS-2019-0028}. In general, interactions can be competitive, cooperative, or a mix of both. Non-cooperative games offer a flexible framework for decision-making when objectives are neither fully aligned nor strictly opposed \cite{basar_book}.
A fundamental concept in this setting is the Nash equilibrium, a stable strategy profile where no agent benefits from unilaterally deviating, making it essential for understanding strategic interactions in dynamic and competitive environments.

Classical game formulations typically assume full information, where agents have complete knowledge of each other’s strategies and objectives. While this allows direct equilibrium computation, it is often unrealistic in practice. For instance, in human-robot interaction, human behavior is difficult to model beforehand, requiring adaptation to observed behavior rather than reliance on predefined strategies \cite{li2019differential}.
In safety-critical applications, unknown or evolving opponent strategies can significantly impact the system's ability to maintain reliable performance. To address this challenge, the first step is to integrate learning and adaptation by refining the controlled agent's belief about the adversary over time and leveraging it to improve control performance. This must be done while managing uncertainty and ensuring convergence to stable, effective strategies, without requiring full equilibrium computation.

Recent works have incorporated online learning into differential games, enabling agents to converge to equilibrium strategies in real time while adapting to dynamic environments. For instance, multi-agent reinforcement learning \cite{zhang2021multiagentreinforcementlearningselective} leverages approximate dynamic programming to iteratively refine strategies, thereby achieving convergence to Nash equilibria \cite{VAMVOUDAKIS20111556}. Linear-quadratic (LQ) games are particularly relevant in this setting due to their strong analytical tractability, which allows for explicit equilibrium solutions and stability guarantees \cite{engwerda}. Building on these properties, several iterative schemes have been proposed to ensure convergence to a Nash equilibrium. In continuous time, algorithms based on the iterative solution of coupled algebraic Riccati equations have been developed to support online control schemes that converge to equilibrium strategies \cite{li1995lyapunov,engwerda2007algorithms}, effectively handling coupled dynamics and cost interactions. In discrete time, related approaches employ Lyapunov or Riccati updates to achieve convergence \cite{yang2019data,LQ_games_learning}.

While effective from a theoretical perspective, the methods presented in \cite{zhang2021multiagentreinforcementlearningselective,VAMVOUDAKIS20111556,engwerda,li1995lyapunov,engwerda2007algorithms,yang2019data,LQ_games_learning} rely on restrictive assumptions when applied to interactive scenarios. In particular, they typically require players to alternate between learning and control phases \cite{LQ_games_learning}, meaning that each player updates its policy while the others keep theirs fixed to allow for identification. This explicit separation between learning and control is unrealistic in dynamic environments, where agents must continuously adapt in real time. Other approaches, such as \cite{li2019differential,VAMVOUDAKIS20111556,Zhang_online_ADP}, instead derive adaptive laws that remove the need for such separation. However, these schemes guarantee convergence only under the assumption that all players follow the same adaptation law. Consequently, they do not fully address the fundamental challenge of learning and adapting in the presence of an unknown adversary, especially when no strong assumptions can be made about its adaptation strategy. In addition, existing methods often neglect other sources of uncertainty, such as exogenous disturbances, which can significantly affect system performance and the overall robustness of the resulting control policies.

This paper proposes a learning-based control framework for LQ games that guarantees robust performance against both unknown adversary strategies and disturbances. Using set-membership estimation, the set of plausible adversary policies is iteratively refined from observed data. This contrasts with explicit inverse game theory \cite{ASL2024105936}, which seeks to directly parameterize the players’ cost functions. Robust LQR design via linear matrix inequalities (LMIs) then ensures closed-loop stability for all unfalsified strategies and convergence to a neighborhood-optimal solution, formalized as an $\epsilon$-Nash equilibrium. The framework thus combines the tractability of LQ games with the robustness of set-membership estimation and LMI-based design. A numerical simulation demonstrates robust stability and near-equilibrium performance even under arbitrary adversary adaptation and disturbances.

The paper is organized as follows. Section II introduces the game-theoretic problem, and Section III presents the set-theoretic approach for identifying the adversary's strategy. Section IV describes the robust policy update scheme, while Section V analyzes convergence. Section VI reports simulation results, and Section VII offers concluding remarks.

\vspace{0cm}
\section{Problem Formulation}
Consider a linear-quadratic differential game governed by system dynamics\footnote{\textbf{Notation:} Bold lowercase/uppercase symbols represent vectors/matrices, respectively. \(\mathbb{R}\) denotes the set of real numbers. Given a matrix
$\bm{A}$, its vectorization is denoted by $\operatorname{vec}(\bm{A})$, its transpose by
$\bm{A}^T$, and its induced 2-norm by $||\bm{A}||$. Given a square matrix $\bm{B}$, $\bm{B} \succ 0$ ($\bm{B} \succeq 0$) denotes that $\bm{B}$ is positive definite (positive semidefinite), and $\operatorname{Tr}(\bm{B})$ denotes its trace. $\otimes$ indicates the Kronecker product. The $n \times n$ identity matrix is indicated by $\bm{I}_n$.}
\begin{align} \label{eq:sys_dyn}
    \dot{\bm{x}}(t) = \bm{A}\bm{x}(t) + \bm{B}_1\bm{u}_1(t) + \bm{B}_2\bm{u}_2(t) + \bm{w}(t),
\end{align}
where $\bm{x} \in \mathbb{R}^{n_x}$ is the system state, $\bm{u}_1 \in \mathbb{R}^{n_{u_1}}$ is the control input of the agent under control, while $\bm{u}_2 \in \mathbb{R}^{n_{u_2}}$ is the control input of an adversary agent whose strategy is unknown, and $\bm{w}(t) \in \mathbb{R}^{n_x}$ represents a general exogenous disturbance that is bounded in magnitude. The matrices $\bm{A} \in \mathbb{R}^{n_x \times n_x}$, $\bm{B}_1 \in \mathbb{R}^{n_x \times n_{u_1}}$, and $\bm{B}_2 \in \mathbb{R}^{n_x \times n_{u_2}}$ characterize the system dynamics, with $\bm{A}$ and $\bm{B}_1$ being known to the controlled agent, while $\bm{B}_2$ is unknown, along with the control strategy of the adversary player $\bm{u}_2$.

Each player aims to minimize an infinite-horizon quadratic cost functional of the form
\begin{align} \label{eq:hum-cost}
    J_i(\bm{x}(\cdot),\bm{u}_i(\cdot)) = \int_{t_0}^{\infty} \left( \bm{x}(t)^T \bm{Q}_i \bm{x}(t) + \bm{u}_i(t)^T \bm{R}_i \bm{u}_i(t) \right) dt \notag \\ i \in \{1,2\},
\end{align}
where $ \bm{Q}_i \succeq 0 $ is the state-weighting matrix and $ \bm{R}_i \succ 0 $ is the input-weighting matrix, penalizing both state deviations and control efforts. For the controlled agent, the weighting matrices $ \bm{Q}_2 $ and $ \bm{R}_2 $ of the adversary are unknown parameters. We now introduce the concept of Nash equilibrium.
\begin{definition} [\cite{engwerda}]
    A set of strategies $\{\bm{u}^*_1,\bm{u}^*_2\} \in \Gamma$, where $\Gamma$ represents the admissible strategy space, represents a Nash equilibrium if
    \begin{equation} \label{eq:nash}
    J_i(\bm{x}, \bm{u}_i, \bm{u}_j) \leq J_i(\bm{x}, \bm{u}_i, \bm{u}_j^*) \quad \forall \bm{u}_i \in \Gamma,
\end{equation}
\end{definition}
We consider the scenario where players have access to the state and can implement feedback strategies. In the context of linear quadratic feedback games over an infinite horizon, optimal linear feedback laws exist \cite{engwerda}, allowing the strategy set to be restricted to

\begin{equation}
    \Gamma_{k,i} = \{ \bm{u}_i(\cdot) \; | \; \bm{u}_i(\cdot) = -\bm{K}_i \bm{x}(\cdot), \bm{K} \in \mathbb{R}^{n_{u_i} \times n_x} \}.
\end{equation}
        
In addition, we assume that a stabilizing Nash equilibrium exists, which can be formalized as follows \cite{li1995lyapunov}:
\begin{assumption}
    At least one of the triples $(\bm{A}, \bm{B}_1 , \sqrt{\bm{Q}_1})$, $(\bm{A}, \bm{B}_2, \sqrt{\bm{Q}_2})$ is stabilizable-detectable.
\end{assumption}
This condition is natural, as it ensures that at least one control agent has the ability to influence and observe the system's unstable modes. The above assumption also guarantees the existence of stabilizing solutions to the coupled algebraic Riccati equations (CAREs)
\begin{align}
\begin{split}
    \bm{A}^T \bm{P}_i + \bm{P}_i \bm{A} - \bm{P}_i \bm{B}_i \bm{R}_i^{-1} \bm{B}_i^T \bm{P}_i - \bm{P}_i \bm{B}_j \bm{R}_j^{-1} \bm{B}_j^T \bm{P}_j + \bm{Q}_i = 0, \quad \\ i \neq j, \quad i, j \in \{1,2\},
\end{split}
\end{align} such that feedback matrix $\bm{K}_i = \bm{R}_i^{-1} \bm{B}_i^T \bm{P}_i$ leads to a Nash equilibrium solution \cite{engwerda}. \\
We consider the problem where the adversary agent's input can be expressed as  
\begin{align}  
    \label{full_input}  
    \bm{u}_2(t) = - \bm{K}_2 \bm{x}(t) + \tilde{\bm{u}}_2(t), 
\end{align}  
where \(-\bm{K}_2 \bm{x}(t) \in \mathbb{R}^{n_{u_2}}\) represents the equilibrium strategy of the adversary, which is unknown to the controlled agent, and \(\tilde{\bm{u}}_2(t) \in \mathbb{R}^{n_{u_2}}\) accounts for deviations from this equilibrium, reflecting the adaptation dynamics as the adversary approaches the equilibrium.
The lumped disturbance \(\tilde{\bm{w}}(t) = \bm{w}(t) + \bm{B}_2 \tilde{\bm{u}}_2(t) \in \mathbb{R}^{n_x}\) captures uncertainty from external perturbations and the non-stationary behavior of the adversary. We impose no assumptions on the stochastic properties of \(\tilde{\bm{w}}(t)\) beyond its membership in a compact set \(\pazocal{W}\), defined as a convex polytope
\begin{equation}
    \label{dist_set}
    \pazocal{W} = \{\tilde{\bm{w}} \in \mathbb{R}^{n_x} \mid \bm{G}_w \tilde{\bm{w}} \leq \bm{g}_w\}.
\end{equation}
The main objective is to determine a feedback gain $\bm{K}_1$ for the controlled agent that minimizes the cost function while ensuring robustness to uncertain adversary dynamics, as we guarantee system stability under all consistent adversary strategies. This involves estimating the adversary's policy while accounting for bounded disturbances, exploiting available samples $\{\dot{\bm{x}}_{k},\bm{x}_{k},\bm{u}_{1,k}\}$, where $k \in \{1,\dots,N\}$ is the sample index, and performing online policy adaptation based on the acquired knowledge. The goal is to achieve an approximately optimal policy that is robust, as it ensures stability and resilience to adversary deviations.

\vspace{0cm}
\section{A Set Membership Approach to Adversary Strategy Identification}
\label{set-mem}
In this section, we focus on how to identify the adversary's strategy by characterizing a set $\Omega$, which describes the adversary’s policy $\bm{B}_2 \bm{K}_2$. This analysis is conducted under the assumption that the disturbance terms $\bm{w}(t)$ and $\tilde{\bm{u}}_2(t)$ are bounded. Since the adversary's input matrix \(\bm{B}_2\) is unknown, it is included in the estimation process. This set-theoretic description is then refined using a data-driven approach, where trajectory samples help discard falsified adversary models. To this end, we rewrite the game dynamics using (\ref{full_input})
\begin{align}
    \label{rev_dyn}
    \begin{split}
    \dot{\bm{x}}(t) &= \bm{A} \bm{x}(t) + \bm{B}_1 \bm{u}_1(t) - \bm{B}_2 \bm{K}_2 \bm{x}(t) + \bm{B}_2 \tilde{\bm{u}}_2(t)  + \bm{w}(t) \\
     &= \bm{A} \bm{x}(t) + \bm{B}_1 \bm{u}_1(t) - \bm{B}_2 \bm{K}_2 \bm{x}(t) + \tilde{\bm{w}}(t).
     \end{split}
\end{align}
Instead of performing a best-fit identification of the adversary's strategy, as in \cite{LQ_games_learning}, we consider the set of adversary models consistent with the recorded input-state data, following identification approaches similar to those in \cite{Bisoffi2020ControllerDF} and \cite{kerz2024safe} for safe online stochastic control. Our goal is to provide an uncertainty-aware description of the adversary's strategy \(\bm{B}_2 \bm{K}_2\), in order to enable robust responses.
By sampling the system at time instants \( k \) and collecting data samples \( \{\dot{\bm{x}}_{k},\bm{x}_{k},\bm{u}_{1,k}\} \), the set of unfalsified models is defined as  
\begin{align}
\label{gen_model}
\begin{split}
    {\Omega} := \{ &\bm{B}_2\bm{K}_2 \in \mathbb{R}^{n_x \times n_x } \; | \\ 
    & \dot{\bm{x}}_{k} - \bm{A} \bm{x}_{k} - \bm{B}_1 \bm{u}_{1,k} + \bm{B}_2 \bm{K}_2 \bm{x}_{k} \in {\pazocal{W}}, \\ &\forall k \in \{1,\dots,N\} \}.
\end{split}
\end{align}  
In the following, we assume that $\Omega$ is bounded, which is not restrictive in practice. This condition can be verified as follows.
\begin{lem}[\cite{Bisoffi2020ControllerDF}] \label{lem_set} 
The set of consistent models \(\Omega\) is convex and closed. It is bounded if and only if the data generating the set satisfies \(\text{rank}\begin{bmatrix} \bm{x}_{0} \cdots \bm{x}_{k} \cdots \bm{x}_{N} \end{bmatrix} = n_x\), where \(N\) is the number of available samples and \(n_x\) is the state-space dimension.
\end{lem}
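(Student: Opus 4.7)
The plan is to show that $\Omega$ is a polyhedron in the space $\mathbb{R}^{n_x\times n_x}$ (which immediately gives convexity and closedness) and then characterize its boundedness via its recession cone, which will turn out to be trivial precisely when the data matrix has full row rank. Throughout, I will treat $M = \bm{B}_2\bm{K}_2$ as the free decision variable, since the set is defined over all matrices of this product form without any rank constraints.

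First, I would substitute the polytopic description of $\pazocal{W}$ into the definition of $\Omega$: the membership $\dot{\bm{x}}_k - \bm{A}\bm{x}_k - \bm{B}_1\bm{u}_{1,k} + M\bm{x}_k \in \pazocal{W}$ becomes the family of affine inequalities $\bm{G}_w M \bm{x}_k \le \bm{g}_w - \bm{G}_w(\dot{\bm{x}}_k - \bm{A}\bm{x}_k - \bm{B}_1\bm{u}_{1,k})$ for $k=1,\dots,N$. Using $\operatorname{vec}(M\bm{x}_k) = (\bm{x}_k^T \otimes \bm{I}_{n_x})\operatorname{vec}(M)$, each constraint is linear in $\operatorname{vec}(M)$, so $\Omega$ is the intersection of finitely many closed halfspaces. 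Convexity and closedness then follow immediately.

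For the boundedness characterization, I would compute the recession cone of $\Omega$. Since $\Omega$ is a non-empty polyhedron (it contains the true $\bm{B}_2\bm{K}_2$) and $\pazocal{W}$ is compact, its recession cone consists of those directions $\bm{D}\in\mathbb{R}^{n_x\times n_x}$ for which $\bm{D}\bm{x}_k \in \operatorname{rec}(\pazocal{W}) = \{\bm{0}\}$ for all $k$, i.e.\ $\bm{D}[\bm{x}_0\;\cdots\;\bm{x}_N]=\bm{0}$. A closed convex set in finite dimension is bounded iff its recession cone is trivial, so $\Omega$ is bounded iff the only $\bm{D}$ annihilating every sample is $\bm{D}=\bm{0}$. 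This kernel condition on $\bm{D}$ is equivalent to the columns of $[\bm{x}_0\;\cdots\;\bm{x}_N]$ spanning $\mathbb{R}^{n_x}$, i.e.\ the data matrix having rank $n_x$, completing the equivalence.

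The main obstacle I anticipate is the recession-cone step: one must justify that $\operatorname{rec}(\pazocal{W})=\{\bm{0}\}$ (using compactness of $\pazocal{W}$ from \eqref{dist_set}) and then argue cleanly that the recession cone of $\Omega$ is exactly $\{\bm{D} : \bm{D}\bm{x}_k\in\operatorname{rec}(\pazocal{W})\;\forall k\}$ rather than something larger. This is standard for polyhedra described by affine inequalities, but the translation between the matrix formulation and the vectorized halfspace formulation deserves care to avoid missing constraints. Once this correspondence is stated precisely, the rank condition drops out by elementary linear algebra.
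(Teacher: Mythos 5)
The paper does not prove this lemma at all---it is stated as an imported result with a citation to \cite{Bisoffi2020ControllerDF}---so there is no in-paper argument to compare against. Your proof is correct and self-contained, and it follows the standard route (essentially the one used in the cited reference): $\Omega$ is the intersection of the finitely many closed halfspaces $\bm{E}_k \operatorname{vec}(\bm{B}_2\bm{K}_2) \le \bm{b}_k$ obtained exactly as in (\ref{data-constr}), hence a convex closed polyhedron, and boundedness is equivalent to triviality of its recession cone $\{\bm{D} : \bm{G}_w \bm{D}\bm{x}_k \le \bm{0}\ \forall k\} = \{\bm{D} : \bm{D}\bm{x}_k = \bm{0}\ \forall k\}$, where the last equality uses compactness of $\pazocal{W}$; this cone is $\{\bm{0}\}$ iff the samples span $\mathbb{R}^{n_x}$. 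Two details you handle correctly but that deserve emphasis: (i) the recession-cone characterization of boundedness requires $\Omega \neq \emptyset$, which holds because the true $\bm{B}_2\bm{K}_2$ is consistent with data generated under $\tilde{\bm{w}}_k \in \pazocal{W}$ (for an empty, hence trivially bounded, $\Omega$ the ``only if'' direction would fail); and (ii) you rightly treat $\bm{B}_2\bm{K}_2$ as an unstructured matrix variable in $\mathbb{R}^{n_x \times n_x}$, since imposing the factored (rank-$n_{u_2}$) structure would destroy convexity---this matches the paper's definition (\ref{gen_model}). The only cosmetic mismatch is indexing: the lemma's data matrix runs from $\bm{x}_0$ to $\bm{x}_N$ while the set is built from $k \in \{1,\dots,N\}$, but this does not affect your argument.
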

The condition is easily satisfied, as it requires a low amount of linearly independent state samples.
Given the noise set description ${\pazocal{W}}$, the adversary strategy set ${\Omega}$ is explicitly derived by reformulating the inequality constraints on the disturbance into constraints on the adversary strategy. In particular, using the dynamics relation (\ref{rev_dyn}), inequality (\ref{dist_set}) is rewritten for each sample as
\begin{subequations}
\begin{align}
    \bm{G}_w \left( \dot{\bm{x}}_{k} - \bm{A} \bm{x}_{k} - \bm{B}_1 \bm{u}_{1,{k}} + \bm{B}_2 \bm{K}_2 \bm{x}_{k} \right) \leq \bm{g}_w, \\
    \bm{G}_w \bm{B}_2 \bm{K}_2 \bm{x}_{k} \leq \bm{g}_w - \bm{G}_w \left( \dot{\bm{x}}_{k} - \bm{A} \bm{x}_{k} - \bm{B}_1 \bm{u}_{1,{k}} \right).
\end{align}
\end{subequations}
We now define  
\begin{align}
    \bm{b}_{k} = \bm{g}_w - \bm{G}_w \left( \dot{\bm{x}}_{k} - \bm{A} \bm{x}_{k} - \bm{B}_1 \bm{u}_{1,{k}} \right),
\end{align}
and recall the vectorization property  
\begin{align}
    \operatorname{vec}(\bm{G}_w \bm{B}_2 \bm{K}_2 \bm{x}_{k}) = \left( \bm{x}_{k}^T \otimes \bm{G}_w \right) \operatorname{vec}(\bm{B}_2 \bm{K}_2).
\end{align}
We additionally define  
\begin{align}
    \bm{E}_{k} = \left( \bm{x}_{k}^T \otimes \bm{G}_w \right),
\end{align}  
in order to derive the vectorized form of the inequality constraints for the set of adversary strategy parameters \( \bm{B}_2 \bm{K}_2 \)
\begin{align}
    \label{data-constr}
    \bm{E}_{k} \operatorname{vec}(\bm{B}_2 \bm{K}_2) \leq \bm{b}_{k}.
\end{align}
Accounting for all available samples, the following representation of the adversary strategy set is derived
\begin{align}
\label{iter_poly}
\begin{split}
     \Omega = \{ &\bm{B}_2\bm{K}_2 \; | \; \bm{E}_{k} \operatorname{vec}(\bm{B}_2 \bm{K}_2) \leq \bm{b}_{k} 
    \\ &\forall k \in \{1,\dots,N\} \}.
\end{split}
\end{align}
This description of \( \Omega \) is called the \( \pazocal{H} \)-representation of the polytope, as it involves a set of inequalities, each defining a half-plane in the parameter space. The effect of adding an informative data point is illustrated in Fig.~\ref{cut}.
\begin{rem}
    New inequalities may be redundant if they are less restrictive than existing constraints. A minimal polytope representation is preferable and can be achieved using ad hoc linear programs to verify redundancy before adding constraints to the $\pazocal{H}$-representation of $\Omega$ \cite{ziegler2012lectures}.
\end{rem}
While the \(\pazocal{H}\)-representation is useful for updating the knowledge of set \(\Omega\) based on new data, it is less convenient for exploiting the information it contains. As noted in \cite{matousek2013lectures}, a convex polytope can also be represented by its vertices, through the \(\pazocal{V}\)-representation, as any polytope can be viewed as the convex hull of these points. This vertex-based representation is often more practical, as it directly provides the extreme points of the polytope, which are essential for efficiently setting up optimization problems, including the one presented in Section IV. To obtain the \(\pazocal{V}\)-representation, we identify the intersections of sets of \(n_x^2 - 1\) half-planes that belong to \(\Omega\), where \(n_x^2\) is the number of parameters under identification (i.e., the entries of \(\bm{B}_2 \bm{K}_2\)), selecting those that lie on the boundary of the set.
\begin{figure}[t]
		\includegraphics[clip,trim=0cm 0cm 0cm 0cm,width=0.48\textwidth]
        {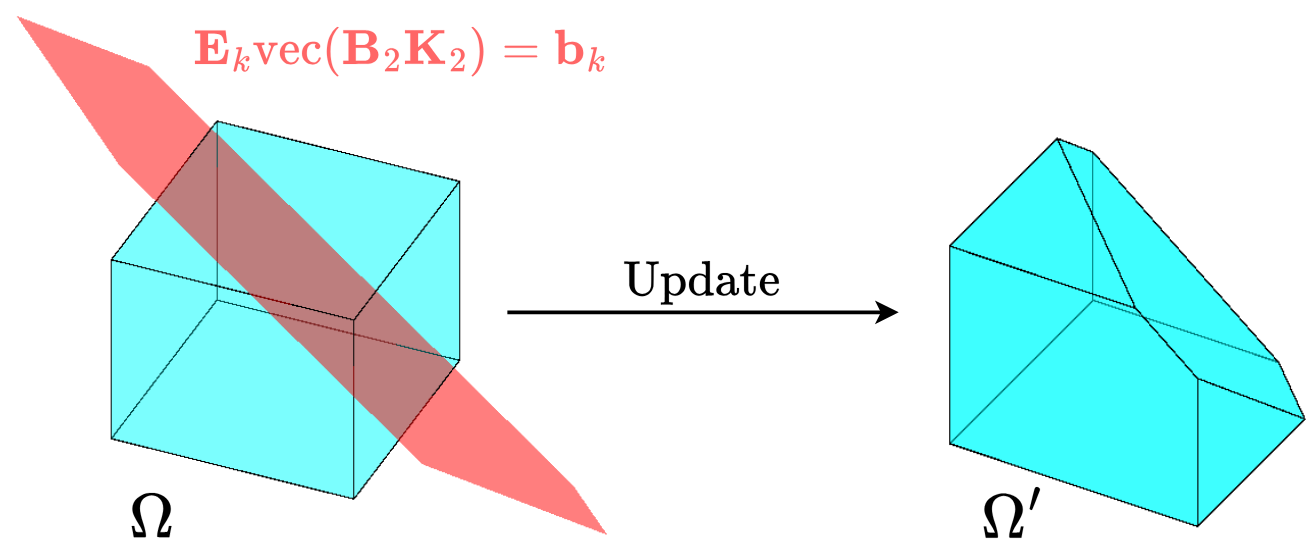}
        \caption{Illustrative example of the effect of adding a non-redundant constraint, corresponding to a new data point, to the set description of $\Omega$. The constraint introduces a plane cut (shown in red) in the parameter space according to (\ref{data-constr}), reducing the polytope’s volume (thus the parameter uncertainty) and generating new vertices.}
        \label{cut}
\end{figure}

\vspace{0.3cm}
\section{Robust Response via Data-Driven LMI-based Riccati Iterations} \vspace{0.1cm}
The previous section outlined an approach for identifying a bounded set of adversary strategies consistent with observations. Here, we present a method to ensure a robust response that guarantees stability against all unfalsified adversary strategies using set-membership techniques. LMIs provide a systematic way to impose robustness constraints, ensuring stability and performance under worst-case adversary strategies. To this end, we adopt an LMI-based solution to the LQR problem arising from (\ref{eq:hum-cost}), following the approach first introduced in \cite{feron_lmi} and later extended to robust LQR in \cite{LMI_LQR}. Specifically, we propose an iterative procedure in which the controlled agent solves a one-sided LQR problem, continuously updating its control law based on the current estimate of the adversary’s strategy.

The first step towards devising a robust LQR strategy is to reformulate it as a semi-definite program (SDP). As a first step, we introduce the additional state 
\begin{align}
    \bm{z}(t) = \begin{bmatrix}
        \bm{Q_1}^{1/2} & 0  \\ 0 & \bm{R_1}^{1/2}
    \end{bmatrix} \begin{bmatrix}
        \bm{x}(t) \\ \bm{u}(t)
    \end{bmatrix},
\end{align}
which can be interpreted as the performance output for the controlled agent in an output energy minimization problem.
The LQR problem is now to design a feedback controller $K_1$ that minimizes the $H_2$ norm of the transfer function $\pazocal{G}:\tilde{\bm{w}} \rightarrow \bm{z}$ \cite{feron_lmi}. To reformulate the problem, we first define the matrix $\bm{A}_1 = \bm{A} - \bm{B}_2 \bm{K}_2$, i.e., the state transition matrix as seen by the controlled agent, and matrix $ \bm{W}_c $, the controllability Gramian of the system. We also define auxiliary variables $ \bm{Y} = \bm{K}_1 \bm{W}_c $ and $ \bm{X} $. As shown in \cite{feron_lmi, DataDriven}, the $H_2$ norm minimization problem can then be rewritten as a semi-definite program of the form
    \begin{subequations}
    \begin{align}
        \label{main_opt}
        (\hat{\bm{W}}_c,\hat{\bm{Y}},\hat{\bm{X}}) &=  \argmin_{\bm{W}_c,\bm{Y},\bm{X}} \left[ \operatorname{Tr}(\bm{Q}_1 \bm{W}_c) + \operatorname{Tr} (\bm{X}) \right] \\
        & \text{s.t.}  \; \bm{A}_1 \bm{W}_c + \bm{W}_c \bm{A}_1^T - \bm{B}_1 \bm{Y} - \bm{Y}^T \bm{B}_1^T \!+\! \bm{I} \preceq 0  \label{stab_constr_1} \\
        & \quad \begin{bmatrix}
            \bm{X} & \bm{R}_1^{1/2} \bm{Y} \\ 
            \bm{Y}^T \bm{R}_1^{1/2} & \bm{W}_c
        \end{bmatrix} \succeq 0, \quad \bm{W}_c \succ 0. \label{stab_constr_2}
    \end{align}
    \end{subequations}
 Once solved, the optimal feedback matrix is obtained as $ \hat{\bm{K}}_1 = \hat{\bm{Y}}\hat{\bm{W}}_c^{-1} $. Note that linear constraint (\ref{stab_constr_1}) implies the quadratic stability of the closed-loop system when employing the optimal control feedback $\hat{\bm{K}}_1$. Constraints (\ref{stab_constr_2}) are instead exploited to linearize the quadratic constraint on the auxiliary variable $\bm{X}$ using the Schur decomposition.

\begin{figure}[t]
        \begin{center}
            \includegraphics[clip,trim=0cm 0cm 0cm 0cm,width=0.3\textwidth]{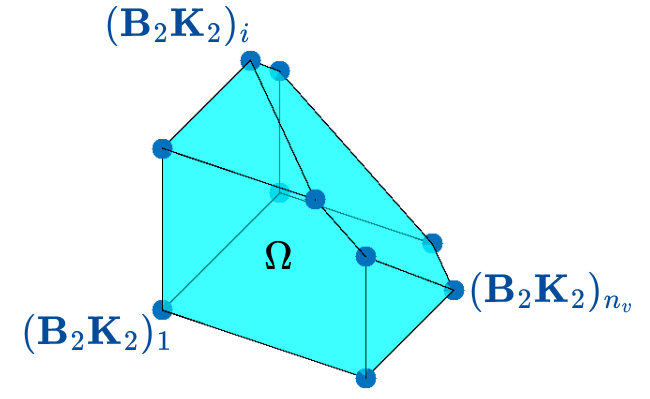}
        \end{center}
        \vspace{-0.2cm}
        \caption{The vertices of the unfalsified strategy set $\Omega$ are shown as blue markers. These vertices correspond to extremal adversary strategies, providing a finite representation of the set. By exploiting them, stability guarantees can be enforced over the entire $\Omega$.}
        \vspace{-0.2cm}
        \label{vertices}
\end{figure}

It can be seen that the formulation above accounts for a nominal description of the system, involving a unique description of $\bm{A}_1 = \bm{A}-\bm{B}_2\bm{K}_2$. In our case, we deal with a polytopic description of the adversary strategy $\bm{B}_2\bm{K}_2$ (see Fig.~\ref{vertices}), which makes the matrix $\bm{A}_1$ uncertain. Conveniently, the LMI-based formulation of LQR can be adapted for robustness by expanding the set of stability LMIs of type~\eqref{stab_constr_1} to account for the set of unfalsified adversary strategies. To guarantee robustness against all unfalsified models in ${\Omega}$, stability constraints are imposed on each vertex~\cite{LMI94} (e.g., the unfalsified strategies, shown as blue markers in Fig.~\ref{vertices}). The following theorem establishes that enforcing stability at all vertices ensures stability for the entire set.
\begin{thm}
    Consider system (\ref{rev_dyn}) and the set $\Omega$ of unfalsified adversary strategies. Assume that the data collected is such that $\Omega$ is bounded, i.e., the conditions of Lemma 1 are satisfied. Then, the satisfaction of the stability inequalities 
    \begin{align}
    \label{stab_constr}
    (\bm{A}-(\bm{B}_2\bm{K}_2)_i)\bm{W}_c+\bm{W}_c(\bm{A}-(\bm{B}_2\bm{K}_2)_i)^T-\bm{B}_1\bm{Y}-\bm{Y}^T\bm{B}_1^T \preceq 0,
\end{align}
    where the index \( i \in \{ 1, \dots, n_v \} \) denotes the vertex under analysis, 
    guarantees the stability against every unfalsified adversary strategy $\overline{\bm{B}_2\bm{K}_2} \in {\Omega}$.
\end{thm}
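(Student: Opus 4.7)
The plan is to exploit the standard convexity argument for polytopic uncertainty: since the set $\Omega$ is a bounded convex polytope (by Lemma 1), every unfalsified strategy is a convex combination of vertices, and since the stability inequality is affine in the uncertain parameter $\bm{B}_2\bm{K}_2$, satisfaction at the vertices propagates to the entire hull.

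First I would fix the auxiliary variables $\bm{W}_c$ and $\bm{Y}$ returned by the SDP and define the matrix-valued map
\begin{equation}
    L(\bm{M}) = (\bm{A} - \bm{M})\bm{W}_c + \bm{W}_c(\bm{A} - \bm{M})^T - \bm{B}_1\bm{Y} - \bm{Y}^T\bm{B}_1^T,
\end{equation}
so that the hypothesis reads $L\big((\bm{B}_2\bm{K}_2)_i\big) \preceq 0$ for every vertex $i = 1,\dots,n_v$. By Lemma 1 together with the boundedness hypothesis, $\Omega$ is a bounded convex polytope, and by the Krein--Milman theorem (or simply the $\pazocal{V}$-representation invoked at the end of Section III) every $\overline{\bm{B}_2\bm{K}_2} \in \Omega$ admits coefficients $\lambda_i \geq 0$ with $\sum_{i=1}^{n_v} \lambda_i = 1$ such that $\overline{\bm{B}_2\bm{K}_2} = \sum_{i} \lambda_i (\bm{B}_2\bm{K}_2)_i$.

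Next I would verify that $L$ is affine in its argument. Splitting off the terms independent of $\bm{M}$, one has $L(\bm{M}) = L(\bm{0}) - \bm{M}\bm{W}_c - \bm{W}_c \bm{M}^T$, from which
\begin{equation}
    L\!\left(\sum_{i=1}^{n_v} \lambda_i (\bm{B}_2\bm{K}_2)_i\right) = \sum_{i=1}^{n_v} \lambda_i \, L\big((\bm{B}_2\bm{K}_2)_i\big),
\end{equation}
because $\sum_i \lambda_i = 1$ absorbs the constant part and the remaining terms are linear in $\bm{M}$. Since non-negative combinations of negative semidefinite matrices remain negative semidefinite, the right-hand side is $\preceq 0$, hence $L(\overline{\bm{B}_2\bm{K}_2}) \preceq 0$ for every point of $\Omega$.

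Finally, using $\hat{\bm{K}}_1 = \bm{Y}\bm{W}_c^{-1}$, the inequality $L(\overline{\bm{B}_2\bm{K}_2}) \preceq 0$ is exactly the Lyapunov condition
\begin{equation}
    (\bm{A}_{\mathrm{cl}})\bm{W}_c + \bm{W}_c (\bm{A}_{\mathrm{cl}})^T \preceq 0, \qquad \bm{A}_{\mathrm{cl}} = \bm{A} - \overline{\bm{B}_2\bm{K}_2} - \bm{B}_1\hat{\bm{K}}_1,
\end{equation}
which, together with $\bm{W}_c \succ 0$ guaranteed by \eqref{stab_constr_2}, certifies quadratic stability of the closed loop via the Lyapunov function $V(\bm{x}) = \bm{x}^T \bm{W}_c^{-1} \bm{x}$. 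I do not anticipate a genuine obstacle: the result is essentially the classical polytopic-LMI lemma of \cite{LMI94}, and the only step that requires any care is checking the affinity of $L(\cdot)$ in $\bm{B}_2\bm{K}_2$ so that a convex combination of feasible LMIs remains feasible. The identity term dropped between \eqref{stab_constr_1} and \eqref{stab_constr} is harmless because it is constant in $\bm{M}$ and could simply be carried along in $L(\bm{0})$ if one wishes to recover a strict inequality and hence asymptotic stability.
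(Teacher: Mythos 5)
Your proposal is correct and follows essentially the same route as the paper's own proof: writing any $\overline{\bm{B}_2\bm{K}_2} \in \Omega$ as a convex combination of the vertices and using the affinity of the stability LMI in $\bm{B}_2\bm{K}_2$ so that the weighted sum of negative semidefinite vertex inequalities remains negative semidefinite. Your added remarks on verifying affinity explicitly and on recovering the Lyapunov certificate via $\bm{W}_c \succ 0$ are sound elaborations of steps the paper leaves implicit.
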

\begin{proof}
    First note that condition (\ref{stab_constr}) equates to imposing the quadratic stability condition for all adversary strategies being vertices of $\Omega$. For all the other elements of $\Omega$, the polytopic nature can be used to establish stability. Specifically, any element of $\Omega$ can be expressed as  
    \begin{align}
        \overline{\bm{B}_2\bm{K}_2} = \sum_{i=1}^{n_v} \lambda_i \cdot (\bm{B}_2\bm{K}_2)_i, \quad \text{with } \lambda_i \geq 0 \; \forall i, \quad \sum_{i=1}^{n_v} \lambda_i = 1.
    \end{align}
    We can now write the stability inequality for the generic element $\overline{\bm{B}_2\bm{K}_2}$
    \begin{align}
        (\bm{A}-\overline{\bm{B}_2\bm{K}_2})\bm{W}_c + \bm{W}_c(\bm{A}-\overline{\bm{B}_2\bm{K}_2})^T - \bm{B}_1\bm{Y} - \bm{Y}^T\bm{B}_1^T \preceq 0.
    \end{align}
    and note that it can be rewritten as  
    \begin{align}
        \sum_{i=1}^{n_v} \lambda_i \big[ &(\bm{A} - (\bm{B}_2\bm{K}_2)_i)\bm{W}_c 
        + \bm{W}_c(\bm{A} - (\bm{B}_2\bm{K}_2)_i)^T \notag \\ 
        & - \bm{B}_1\bm{Y} - \bm{Y}^T\bm{B}_1^T \big] \preceq 0.
    \end{align}
    We note that it is trivially satisfied since \( \lambda_i \geq 0 \) and each term in the summation satisfies the matrix inequality.
    \hfill\qedsymbol
\end{proof}
The LQR minimization problem was thus modified to enforce stability across all system configurations. Given the bounded disturbance assumption on $\pazocal{W}$, the set $\Omega$ includes all unfalsified adversary strategies, ensuring that the control law from (\ref{main_opt}) under constraints (\ref{stab_constr}) remains robustly stable. While these constraints may limit optimality with respect to the true adversary strategy, they guarantee stability across all unfalsified behaviors. Some results on the convergence to optimal behaviour, or equivalently to the Nash equilibrium strategy, are reported in Section \ref{main_proof}.
\begin{rem}
    Through our method, we perform approximate Riccati iterations, which refine the cost-to-go function at each step. These can be interpreted as value iterations, in contrast to Lyapunov iterations, which follow a policy iteration approach by separating policy evaluation and update steps, as seen in works like \cite{li1995lyapunov}.
\end{rem}

\vspace{0cm}
\textbf{Algorithmic Formulation.} The previously introduced components enable the formulation of game-theoretic robust control, where data-driven learning and policy updates are performed iteratively. In particular, the control strategy is updated based on the latest polytopic characterization of the adversary’s strategy. The complete procedure is summarized in Algorithm~\ref{main_alg}.
The initialization requires a conservative specification of the admissible adversary strategies, \(\Omega^0\), from which an initial \(\bm{K}_1^0\) is obtained via the SDP (\ref{main_opt})-(\ref{stab_constr}) (line 3). 
Following initialization, the control strategy is refined at fixed intervals \(T\). At each iteration $j$, data samples $\pazocal{D}^j$ are collected (line 7), updating the knowledge of admissible adversary strategies encoded in \(\Omega\). Specifically, data from $\pazocal{D}^j$ defines set $\Omega_{\pazocal{D}^j}$ (line 8), which is then intersected with the unfalsified models \(\Omega^{j-1}\) from the previous iteration, to obtain \(\Omega^j\).
This updated model knowledge is then used to solve the SDP-based robust LQR problem, yielding an improved feedback gain \(\bm{K}_1\) (line 10) that ensures robustness against adversarial strategies.\footnote{Code for Algorithm~\ref{main_alg} and for the simulations presented hereafter can be found at: \url{https://github.com/TUM-ITR/STAR-LQG}}

\begin{rem}
    Sampling times can be selected freely and strategically optimized to improve the identification process, e.g. leveraging techniques from experiment design in bounded disturbance settings \cite{MILANESE198678}.
\end{rem}

\begin{rem}
The use of general-form polytopes leads to a large number of constraints and vertices, which also increases the complexity of the subsequent optimization problem. Efficiency can be improved with approximations such as zonotopes or ellipsoids, which simplify vertex enumeration.
\end{rem}

\setlength{\textfloatsep}{2pt}
\RestyleAlgo{ruled}
\begin{algorithm}[t]
\caption{Game-Theoretic Robust Control}\label{main_alg}
\KwData{Learning horizon $T$ \\ Sampling time $\Delta t$ \\ Game parameters $\bm{Q}_1, \bm{R}_1$ \\ Known dynamics matrices $\bm{A}, \bm{B}_1$ \\ Lumped disturbance set $\pazocal{W}$ \\ Initial unfalsified adversary strategy set ${\Omega}^0$}
$j \gets 0$\;
$t \gets t_0$\;
$\bm{K}_1 \gets$ Solve SDP$(\bm{Q}_1, \bm{R}_1, \bm{A}, \bm{B}_1, {\Omega}^0)$ from Eq.~(\ref{main_opt}) using constraints (\ref{stab_constr}) arising from ${\Omega}^0$\;
\While{true}{
    Control system using feedback gain $\bm{K}_1$ for interval $[t,t+T]$\;
    $j \gets j + 1$\; 
    Collect samples $\pazocal{D}^j$ \;
    Given $\pazocal{W}$ and $\pazocal{D}^j$, compute ${\Omega}_{\pazocal{D}^j}$\;
    ${\Omega}^j \gets {\Omega}^{j-1} \cap {\Omega}_{\pazocal{D}^j}$\;
    $\bm{K}_1^j \gets$ Solve SDP$(\bm{Q}_1, \bm{R}_1, \bm{A}, \bm{B}_1, \bm{\Omega}^j)$ from Eq.~(\ref{main_opt}) using constraints (\ref{stab_constr}) arising from ${\Omega}^j$\;
    $\bm{K}_1 \gets \bm{K}_1^j$\;
    $t \gets t +T$\; 
}  
\end{algorithm}

\section{Convergence to an \texorpdfstring{$\epsilon$-}{epsilon-}Nash Equilibrium Strategy}
\label{main_proof}
In this section, we derive bounds on the optimality of the control law presented in Algorithm \ref{main_alg}. To achieve this, we first establish convergence results for the set-based estimation approach introduced in Section \ref{set-mem}. Understanding the convergence properties of this estimation process is crucial for ensuring the reliability of the learned adversary model and its impact on the optimality of the control strategy. However, proving convergence analytically is highly challenging, if not infeasible, due to the disturbance term \(\tilde{\bm{w}}\), which is only specified as a bounded set rather than a probabilistic model. Additionally, the complexity of high-dimensional polytopes (dimensions greater than three) prevents analytical volume formulations.  
To address this, we introduce a relaxation where the estimation polytope is approximated by outer bounding ellipsoids \cite{MILANESE1991997, FOGEL1982229}. Any convergence results under this relaxation remain valid for the original problem, as the latter imposes stricter constraints. To proceed, we assume that the disturbance components are uncoupled, so that each element of the disturbance vector can be bounded independently as \vspace{-0.1cm}
\begin{align}
    \tilde{w}_{i}^2 \leq \gamma_i^2, \quad i \in \{1,\dots,n_x\},
\end{align}
where $\tilde{w}_{i}$ denotes the $i$-th component of $\tilde{\bm{w}}$. This uncoupling assumption can always be enforced by appropriately defining the uncertainty set $\pazocal{W}$, or by extending it for the purpose of convergence analysis. 
We now reformulate the estimation model in scalar form. Starting from (\ref{gen_model}), define
\begin{align}
    \bm{y}_{k} = -\dot{\bm{x}}_{k} + \bm{A} \bm{x}_{k} + \bm{B}_1 \bm{u}_{1,k}, \qquad
    \bm{\Theta} = \bm{B}_2 \bm{K}_2 \in \Omega,
\end{align}
so that \vspace{-0.2cm}
\begin{align}
    \bm{y}_k = \bm{\Theta} \bm{x}_k + \tilde{\bm{w}}_k.
\end{align}
The scalar components are then
\begin{align}
    y_{k,i} = \bm{\theta}_i x_{k} + \tilde{w}_{k,i},
\end{align}
where $\bm{\theta}_i$ is the $i$-th row of $\bm{\Theta}$ and $k$ indexes the samples.
This formulation enables the application of results from \cite{Dasgupta}, which introduces an iterative ellipsoid-bounding procedure that incorporates new data-driven constraints to refine an outer ellipsoidal approximation
\begin{align}
    \bar{\Omega}_{k} = \left\{ \bm{\theta} : (\bm{\theta}-\bm{\theta}_{c,k})^\top \bm{S}_{k}^{-1} (\bm{\theta}-\bm{\theta}_{c,k}) \leq \sigma_{k}^2 \right\}
\end{align}
of ${\Omega}_{k}$. Here, $\bm{\theta}_{c,k}$ denotes the center of the ellipsoid at iteration $k$, and $\bm{S}_{k} \succ 0$ is the shape matrix. This leads to the following estimation convergence result.

\begin{lem} [\cite{Dasgupta}]
\label{conv_lemma}
    If there exist constants $\alpha_1 > 0$, $\alpha_2 > 0$, and $N > 0$ such that for all $t$,
    \begin{align}
    \label{PERS_EXC}
        0 < \alpha_1 \bm{I}_{n_x+1} \leq \sum_{k = t}^{t + N \Delta t} \begin{bmatrix} \bm{x}_{k} \\ \tilde{w}_{k,i} \end{bmatrix} \begin{bmatrix} \bm{x}_{k}^T \; \tilde{w}_{k,i} \end{bmatrix} \leq \alpha_2 \bm{I}_{n_x+1} < \infty,
    \end{align}
    then \vspace{-0.2cm}
    \begin{align}
        \lim_{k \rightarrow \infty} \bm{S}_{k} = \bm{S}_{\infty}, \\
        \lim_{k \rightarrow \infty} \sigma_{k}^2 \in [0,\gamma^2].
    \end{align}
\end{lem}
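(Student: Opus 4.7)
The plan is to follow the standard analysis for optimal bounding ellipsoid (OBE) algorithms, of which the scheme in \cite{Dasgupta} is an instance. The starting point is the explicit recursion for $(\bm{\theta}_{c,k}, \bm{S}_k, \sigma_k^2)$: at each step the prior ellipsoid $\bar{\Omega}_{k-1}$ is intersected with the measurement strip $\{\bm{\theta} : |y_{k,i} - \bm{\theta} \bm{x}_k| \leq \gamma_i\}$, and the tightest outer ellipsoid of the intersection is computed. This update takes a matrix-inversion-lemma form parameterized by a scalar weight $\rho_k \in [0,1]$ chosen by a one-dimensional minimization (of volume or trace); the shape matrix evolves as a rank-one Sherman–Morrison update driven by $\bm{x}_k \bm{x}_k^T$, while $\sigma_k^2$ is driven by the measurement innovation $y_{k,i} - \bm{\theta}_{c,k-1}\bm{x}_k$ and the strip half-width $\gamma_i$.

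First, I would show convergence of $\sigma_k^2$. The OBE update is constructed so that $\sigma_k^2$ is non-increasing in $k$: whenever the new strip is informative, the outer ellipsoid parameters can be chosen no larger than the prior. Monotonicity together with $\sigma_k^2 \geq 0$ gives a limit by the monotone convergence theorem. To pin down the upper bound $\gamma^2$, I would argue that the bound $\tilde{w}_{k,i}^2 \leq \gamma_i^2$ on the disturbance makes each strip contain the true parameter with slack at most $\gamma_i$, so that after finitely many informative samples one cannot have $\sigma_k^2 > \gamma^2 := \max_i \gamma_i^2$.

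Second, I would handle the shape matrix $\bm{S}_k$. Using the lower bound in \eqref{PERS_EXC}, over any window of length $N \Delta t$ the information matrix $\bm{S}_k^{-1}$ receives a contribution of at least $\alpha_1 \bm{I}$, which prevents degeneracy of the update. The upper bound $\alpha_2 \bm{I}$ keeps the innovation covariance well-conditioned and guarantees that the weights $\rho_k$ stay bounded away from pathological values. Combining these with the non-increasing trace produced by each OBE step yields a sequence $\{\bm{S}_k\}$ that is monotone in the Loewner sense along the extracted windowed subsequence and bounded below by the zero matrix, so a limit $\bm{S}_\infty$ exists; standard continuity of the update map then lifts the subsequential limit to the full sequence.

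The main obstacle is the coupling between the regressor $\bm{x}_k$ and the noise $\tilde{w}_{k,i}$ in the augmented PE condition \eqref{PERS_EXC}. Unlike classical RLS convergence, where persistent excitation of $\bm{x}_k$ alone suffices, here the disturbance realizations play an active role in the OBE update (they determine which strips actually cut the ellipsoid), and the lemma's augmented excitation is precisely the condition that avoids situations where the noise aligns with the regressor so as to make updates trivial. Showing that this augmented PE simultaneously yields a well-defined $\bm{S}_\infty$ and the tight bound $\lim \sigma_k^2 \in [0, \gamma^2]$—rather than convergence to a strictly larger value—is the delicate point, and it is the step where one must carefully invoke the specific weight-selection rule from \cite{Dasgupta} rather than a generic contraction argument.
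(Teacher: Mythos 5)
The paper does not prove this lemma at all: it is imported verbatim from the cited reference \cite{Dasgupta}, so there is no in-paper argument to compare your attempt against. Judged on its own terms, your proposal correctly identifies the object being analyzed (a data-dependent optimal-bounding-ellipsoid recursion with a scalar weight $\rho_k$ chosen by a one-dimensional optimization) and the right high-level structure (monotone convergence of $\sigma_k^2$, then convergence of the shape matrix under the augmented persistence-of-excitation condition). However, it is a plan rather than a proof, and two of its load-bearing claims are stated in a form that would not survive being made precise.

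First, the assertion that ``after finitely many informative samples one cannot have $\sigma_k^2 > \gamma^2$'' is not how the bound $\lim_k \sigma_k^2 \le \gamma^2$ is obtained, and as a finite-time statement it is false in general: a single strip of half-width $\gamma_i$ containing the true parameter does not force the outer ellipsoid's squared radius below $\gamma^2$. The actual argument in the OBE literature is asymptotic --- one shows that whenever $\sigma_k^2 - \gamma^2 \ge \varepsilon > 0$ and the excitation condition \eqref{PERS_EXC} holds over the window, the optimal weight produces a decrease in $\sigma_k^2$ bounded below by a positive quantity depending on $\varepsilon$, $\alpha_1$, $\alpha_2$; since $\sigma_k^2$ is non-increasing and bounded below, such a decrease can occur only finitely often, which forces the limit into $[0,\gamma^2]$. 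Second, your claim that $\{\bm{S}_k\}$ is ``monotone in the Loewner sense along the extracted windowed subsequence'' is not generally true: the shape-matrix update is a rank-one modification with a data-dependent, possibly vanishing weight and is not Loewner-monotone, so you cannot obtain $\bm{S}_\infty$ by a monotone-plus-bounded argument. In \cite{Dasgupta} the convergence of $\bm{S}_k$ is instead tied to the convergence of $\sigma_k^2$ (which forces the update weights to vanish) combined with the two-sided excitation bound, which is exactly why the lemma's hypothesis involves the augmented regressor $[\bm{x}_k^T\ \tilde{w}_{k,i}]^T$ rather than $\bm{x}_k$ alone --- a point you flag as delicate but do not resolve. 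To turn the sketch into a proof you would need to write down the explicit $(\bm{\theta}_{c,k},\bm{S}_k,\sigma_k^2)$ recursion and carry out both quantitative steps.
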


\setlength{\textfloatsep}{20pt plus 2pt minus 4pt}

In particular, the squared radius variable $\sigma_{k}^2$ exhibits exponential convergence. The above lemma implies that as long as the persistence of excitation condition (\ref{PERS_EXC}) is satisfied, the outer ellipsoid enclosing the true parameter set converges to a bounded region, the size of which depends on the disturbance bound. 
We now proceed to establish a theorem concerning the convergence of the proposed algorithm toward the Nash equilibrium solution.
\begin{figure*}[t]\centering
		\includegraphics[clip,trim=4.3cm 0.3cm 4.5cm 0.6cm,width=\textwidth]{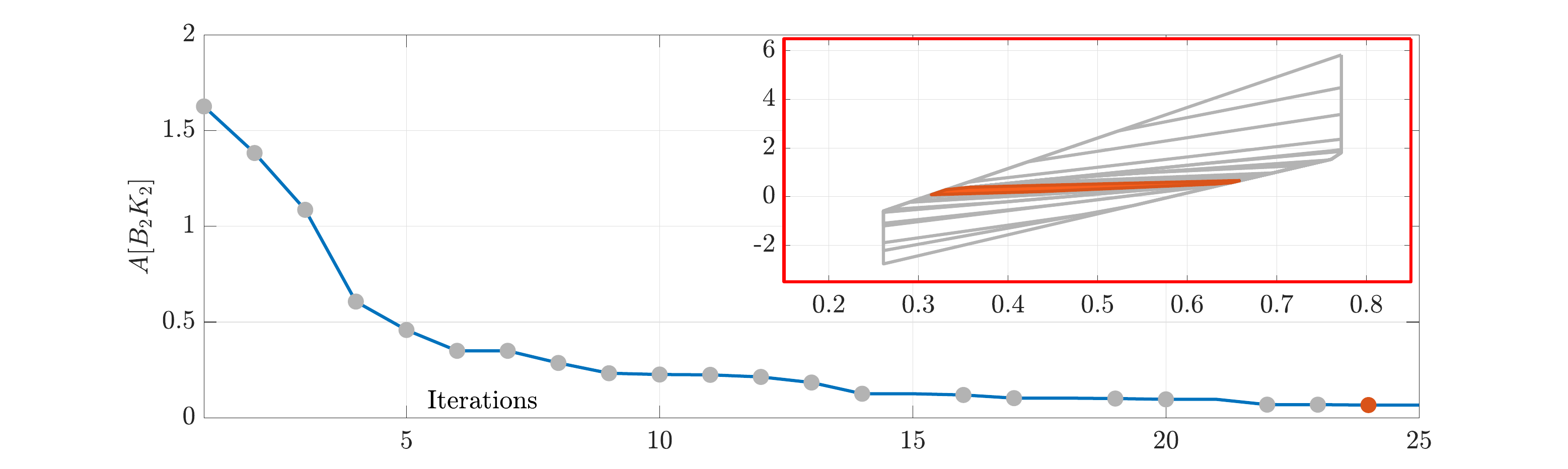}
        \vspace{-0.7cm}
        \caption{The evolution of the uncertainty area $A(\bm{B}_2 \bm{K}_2)$ for the estimated parameters $\bm{B}_2 \bm{K}_2$, when applying Algorithm 1 on system (\ref{sim_example}), is shown in blue in the main graph. This area corresponds to the polytope describing the uncertain parameters. Iterations where the uncertainty decreases are marked with gray dots, and the final settled area is shown in orange. The top-right graph illustrates the contraction of the polytopes (simple polygons) over iterations. Each dimension corresponds to one of the two estimated parameters, i.e., the non-zero entries of $\bm{B}_2 \bm{K}_2$. The red bounding box represents the initial conservative specification $\Omega^0$, gray polygons show the shrinking uncertainty sets $\Omega_k$, and the set reached at iteration 25 of Algorithm~1 is shown in orange.}
        \label{bound_est}
        \vspace{-0.3cm}
\end{figure*}
\begin{thm}
    Consider the polytope $\Omega$ defined in (\ref{iter_poly}) and its iterative update, approximated by outer ellipsoids $\bar{\Omega}_k$. Suppose the conditions of Lemma \ref{conv_lemma} hold. Then, under these conditions, the controlled agent's policy, computed according to (\ref{main_opt}), converges to an $\epsilon$-Nash strategy $\bm{\bar{u}}_1$, i.e.,
    \begin{equation}
    \bar{J}_1(\bm{x}, \bm{\bar{u}}_1, \bm{u}_2^*) \leq J_1(\bm{x}, \bm{u}_1, \bm{u}_2^*) + \epsilon, \quad \forall \bm{u}_1 \in \Gamma_1.
    \end{equation}
\end{thm}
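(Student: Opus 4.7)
The plan is to chain three ingredients. First, Lemma~\ref{conv_lemma} certifies that the outer ellipsoid $\bar{\Omega}_k$, and therefore the contained polytope $\Omega_k$ used in the SDP, asymptotically contracts into a bounded region whose radius scales with the disturbance bound $\gamma$. Second, the SDP (\ref{main_opt}) subject to (\ref{stab_constr}) is a convex parametric program whose optimum depends continuously on the finite set of vertices defining the LMI constraints. Third, the infinite-horizon LQR cost (\ref{eq:hum-cost}) is a Lipschitz function of the feedback matrix on compact sets of stabilizing gains. Combining these, the feedback returned by Algorithm~\ref{main_alg} approaches the Nash gain $\bm{K}_1^*$ obtained from the CARE associated with the true adversary, and the gap in realized cost collapses to an $\epsilon$ governed by $\gamma$.

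Concretely, I would fix the true adversary policy $\bm{u}_2^* = -\bm{K}_2^* \bm{x}$ and note that the true matrix $\bm{B}_2 \bm{K}_2^*$ lies in $\Omega_k$ for every $k$, since the true disturbance realization is admissible by construction of $\pazocal{W}$. Let $d_k := \mathrm{diam}(\Omega_k)$. By Lemma~\ref{conv_lemma} together with the sandwich $\Omega_k \subseteq \bar{\Omega}_k$, one obtains $\limsup_k d_k \leq \rho(\gamma)$ with $\rho(\gamma)\to 0$ as $\gamma\to 0$. Every vertex $(\bm{B}_2\bm{K}_2)_i$ that enters (\ref{stab_constr}) thus lies in a ball of radius $\rho(\gamma)$ around $\bm{B}_2 \bm{K}_2^*$. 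For sufficiently small $\rho(\gamma)$, the Nash gain $\bm{K}_1^*$ quadratically stabilizes each vertex, so a suitably scaled pair $(\bm{W}_c,\bm{Y})$ optimal for the nominal CARE problem remains feasible for the robust SDP. Using convexity and continuity of the SDP with respect to its constraint data, the optimum $\hat{\bm{K}}_1(\Omega_k)$ lies in a $\delta(\gamma)$-neighborhood of $\bm{K}_1^*$.

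Finally, I would translate the feedback-gain convergence into cost convergence. Since $\bm{A}-\bm{B}_2 \bm{K}_2^* - \bm{B}_1 \hat{\bm{K}}_1(\Omega_k)$ is Hurwitz for all large $k$ by the robust stability guarantee of Theorem~1, the closed-loop Lyapunov equation admits a unique solution that depends continuously on $\hat{\bm{K}}_1$, so $J_1(\bm{x},\hat{\bm{u}}_1,\bm{u}_2^*) - J_1(\bm{x},\bm{u}_1^*,\bm{u}_2^*)$ is bounded by some $\epsilon(\gamma) = L \, \delta(\gamma)$, with $L$ a Lipschitz constant determined by the Lyapunov solution on a compact set of stabilizing gains. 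Together with the Nash best-response inequality $J_1(\bm{x},\bm{u}_1^*,\bm{u}_2^*) \leq J_1(\bm{x},\bm{u}_1,\bm{u}_2^*)$ for all $\bm{u}_1 \in \Gamma_1$, this yields the claimed $\epsilon$-Nash property.

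The main obstacle I expect is the sensitivity analysis of the SDP under varying constraint polytopes: one must justify that small Hausdorff-distance perturbations of $\Omega_k$ around $\{\bm{B}_2 \bm{K}_2^*\}$ produce only small changes in the optimal feedback, despite the combinatorial change in the number of active vertices as new data are processed. This reduces to verifying a Slater-type regularity condition for the limiting program and invoking upper-semicontinuity of the minimizer map, both of which hold under Assumption~1 and the boundedness granted by Lemma~\ref{lem_set}. A secondary difficulty is that the realized cost $\bar{J}_1$ along disturbed trajectories contains an additive contribution driven by $\tilde{\bm{w}}$, which must be absorbed into $\epsilon$ using the bound on $\pazocal{W}$; this piece is essentially perturbation-analytic and follows from standard $H_2$/energy estimates on the stabilized closed loop.
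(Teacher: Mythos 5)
Your proposal is correct in outline and follows the same three-stage decomposition as the paper: (i) use Lemma~\ref{conv_lemma} to bound the residual parameter uncertainty, (ii) convert that into a bound on the deviation of the computed feedback gain from the Nash gain $\bm{K}_1^*$, and (iii) propagate the gain deviation into a cost deviation, closing with the best-response inequality at the true equilibrium. The difference lies in the middle step. The paper takes a shortcut via perturbation theory for the algebraic Riccati equation: it bounds the longest semi-axis of the limiting ellipsoid by $\gamma\sqrt{\lambda_{\max}(\bm{S}_\infty)}$, hence $\|\Delta \bm{A}_1\| \leq n_x\gamma\sqrt{\lambda_{\max}(\bm{S}_\infty)}$, and then invokes a known ARE sensitivity bound to obtain $\|\Delta\bm{K}_1\|\leq\delta$, finally producing the explicit constant $\epsilon = \bigl(2\|\bm{K}_1^{*T}\bm{R}_1\|\delta + \|\bm{R}_1\|\delta^2\bigr)\operatorname{Tr}(\bm{P}_1)$ with $\bm{P}_1 = \int \bm{x}\bm{x}^T d\tau$. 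You instead argue through parametric sensitivity of the robust SDP itself (feasibility of the nominal solution for small uncertainty, upper-semicontinuity of the minimizer map), which is arguably more faithful to what Algorithm~\ref{main_alg} actually computes --- the robust LMI solution is not literally the ARE solution for any single vertex, so the paper's appeal to ARE perturbation bounds is itself a mild idealization. The price you pay is exactly the obstacle you identify: establishing continuity of the SDP optimum under Hausdorff perturbations of the vertex set is nontrivial and is left as a claim, whereas the paper's route delegates that difficulty to a citable Riccati perturbation result and yields a closed-form $\epsilon$. Your additional observation that the realized cost $\bar{J}_1$ carries an additive disturbance-driven term that must be absorbed into $\epsilon$ is a point the paper's proof glosses over, and is worth retaining. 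Neither argument is fully airtight, but yours would need the semicontinuity lemma spelled out to reach the same level of completeness as the published proof.
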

\begin{proof}
    Under the conditions of Lemma \ref{conv_lemma}, the outer bounding ellipsoid for the uncertainty in the model parameters converges to a bounded volume around its centroid. Specifically, defining $\lambda_{\max}(\bm{S}_\infty)$ as the largest eigenvalue of the ultimate shape matrix \( \bm{S}_\infty \), we obtain an upper bound on the longest axis of the ellipsoid
    \begin{equation}
        l_{\max} \leq \gamma \sqrt{\lambda_{\max}(\bm{S}_\infty)}.
    \end{equation}
    Since this holds for all entries in the parameter matrix \( \bm{B}_2 \bm{K}_2 \), we can bound the maximum perturbation in \( \bm{B}_2 \bm{K}_2 \), or equivalently, the induced perturbation in the state transition matrix as observed by the controlled agent
    \begin{equation}
        \|\Delta \bm{A}_1\| = \|\Delta (\bm{B}_2 \bm{K}_2)\| \leq n_x \gamma \sqrt{\lambda_{\max}(\bm{S}_\infty)}.
    \end{equation}
    Using perturbation bounds for the algebraic Riccati equation (see \cite{sun}), we obtain a computable bound on the deviation of the stabilizing LQR gain
    \begin{equation} \label{K_bound}
        \|\Delta \bm{K}_1\| \leq \delta(\|\Delta \bm{A}_1\|) = \delta \left(n_x \gamma \sqrt{\lambda_{\max}(\bm{S}_\infty)}\right),
    \end{equation}
    which we will hereafter denote simply as $\delta$. Rewriting the cost-to-go function,
    \begin{equation}
        J_1 = \int_{t}^{\infty} \bm{x}(\tau)^T (\bm{Q}_1 + \bm{K}_1^T \bm{R}_1 \bm{K}_1) \bm{x}(\tau) \, d\tau,
    \end{equation}
    we express the deviation from the optimal cost:
    \begin{equation}
        {\bar{J}}_1 - J_1^* = \int_{t}^{\infty} \bm{x}(\tau)^T \left( 2 \bm{K}_1^{*T} \bm{R}_1 \Delta \bm{K}_1 + \Delta \bm{K}_1^T \bm{R}_1 \Delta \bm{K}_1 \right) \bm{x}(\tau) \, d\tau.
    \end{equation}
    Applying the bound from (\ref{K_bound}), such deviation can be further limited as
    \begin{equation}
        ({\bar{J}}_1 - J_1^*) \leq \left(2 \| \bm{K}_1^{*T} \bm{R}_1 \| \delta + \| \bm{R}_1 \| \delta^2 \right) \operatorname{Tr}(\bm{P}_1),
    \end{equation}  
    where \( \bm{P}_1 = \int_t^{\infty} \bm{x}(\tau) \bm{x}(\tau)^T d\tau \).
    Thus, setting 
    \begin{equation}
        \epsilon = \left(2 \| \bm{K}_1^{*T} \bm{R}_1 \| \delta + \| \bm{R}_1 \| \delta^2 \right) \operatorname{Tr}(\bm{P}_1),
    \end{equation}
    we conclude that the learned policy leads to an $\epsilon$-Nash equilibrium.  
    \hfill\qedsymbol
\end{proof}

\section{Simulation Results}
This simulation study serves two main purposes: (i) to demonstrate that the proposed learning-based control strategy converges to a near-Nash equilibrium as more data is collected, and (ii) to illustrate that, unlike least-square-based estimation approaches, our method guarantees robustness against all unfalsified adversary strategies. These two aspects are examined in the following subsections.
To illustrate the effectiveness of the proposed method, we present simulation results based on a practical example from \cite{LQ_games_learning}. This example involves human–robot interaction, specifically the dynamics of a contact robot described by system matrices
\begin{align}
\label{sim_example}
    \bm{A} &= 
    \begin{bmatrix}
        0 & 1 \\
        0 & \frac{D_c}{J_c}
    \end{bmatrix}, \quad
    \bm{B}_1 = 
    \begin{bmatrix}
        0 \\
        \frac{1}{J_c}
    \end{bmatrix}, \quad
    \bm{B}_2  = 
    \begin{bmatrix}
        0 \\
        \frac{b}{J_c}
    \end{bmatrix},
\end{align}
where the state vector is defined as \(\bm{x} = \begin{bmatrix} x_e - r & v_e \end{bmatrix}^T\), with \(x_e\) representing the robot's end-effector position, \(r\) the target position, and \(v_e\) the end-effector velocity. The control inputs are \(u_1\) for the robot (controlled agent) and \(u_2\) for the human agent (adversary), whose behavior is unknown. The system parameters are \(J_c = 6~\si{kg}\) (inertia), \(D_c = \text{-}~0.2~\si{N/m}\) (damping), and \(b = 0.8\), a scaling factor unknown to the robot. 

The task models human arm reaching movements, where a human agent guides the end effector from an initial to a target position with robotic assistance. This setup could be relevant in rehabilitation, when aiding patients in relearning motions, as well as in manufacturing, where operators are assisted in moving heavy objects. The state weighting matrices are \(\bm{Q}_1 = \operatorname{diag}([25, \; 0.1])\) and \(\bm{Q}_2 = \operatorname{diag}([15, \; 0.3])\), while the input weights are \(R_1 = 0.1\) and \(R_2 = 0.15\), with \(\bm{Q}_2\) and \(R_2\) unknown to the controlled agent. The exogenous disturbance is sampled uniform noise with bounds $|w_1|\leq0.5$, $|w_2| \leq 0.5$.
In this setup, the human player interacts with the system using the input  
$u_2(t) = -\bm{K}_2^* \bm{x}(t) + \tilde{u}(t)$,  
where $\bm{K}_2^* = [2.69,1.37] $ represents the equilibrium feedback strategy, while $\tilde{u}(t) = 2 \operatorname{cos}(2 \pi t) \operatorname{e}^{-0.2t}$ accounts for the non-equilibrium component as the human settles into the Nash strategy.

\subsection{Convergence Properties} 
We first focus on convergence, aiming to show that the estimation of the adversary strategy set $\Omega$ contracts around the true parameters, and that the controlled agent’s feedback policy $\hat{\bm{K}}_1$ simultaneously converges toward the Nash equilibrium strategy. We thus apply Algorithm \ref{main_alg} to steer the controlled agent. The objective is to drive the system to \(r = 0\), starting from \(\bm{x}(0) = [-3, \; 0]^T\). Estimation and policy updates occur at intervals of \(T = 0.03\)s, with data collected at fixed sampling times of \(\Delta t = 0.01\)s. Figure \ref{bound_est} illustrates the evolution of the uncertainty in the estimated parameters \(\bm{B}_2 \bm{K}_2\). The main plot shows the area enclosed by the polytope representing the uncertain parameters, which decreases over iterations. The top-right plot shows the contraction of the polytopes themselves. Overall, the estimation process converges to a polygon surrounding the true parameter values \((\bm{B}_2 \bm{K}_2)_{2*} = [0.36, \; 0.18]\), demonstrating how the uncertainty shrinks as the algorithm progresses. These updates are accompanied by policy refinements for the controlled agent’s feedback gain \(\hat{\bm{K}}_1\), as shown in Fig. \ref{control_conv}. Notably, strategy updates proceed in sync with the estimation refinements, until \(\hat{\bm{K}}_1 = [13.88, \; 12.27]\) is reached at iteration 25, closely approximating the Nash equilibrium solution \(\bm{K}_1^* = [13.81, \; 12.05]\).

\begin{figure}[t]
		\includegraphics[clip,trim=3.1cm 0.75cm 2.7cm 0.9cm,width=0.48\textwidth]{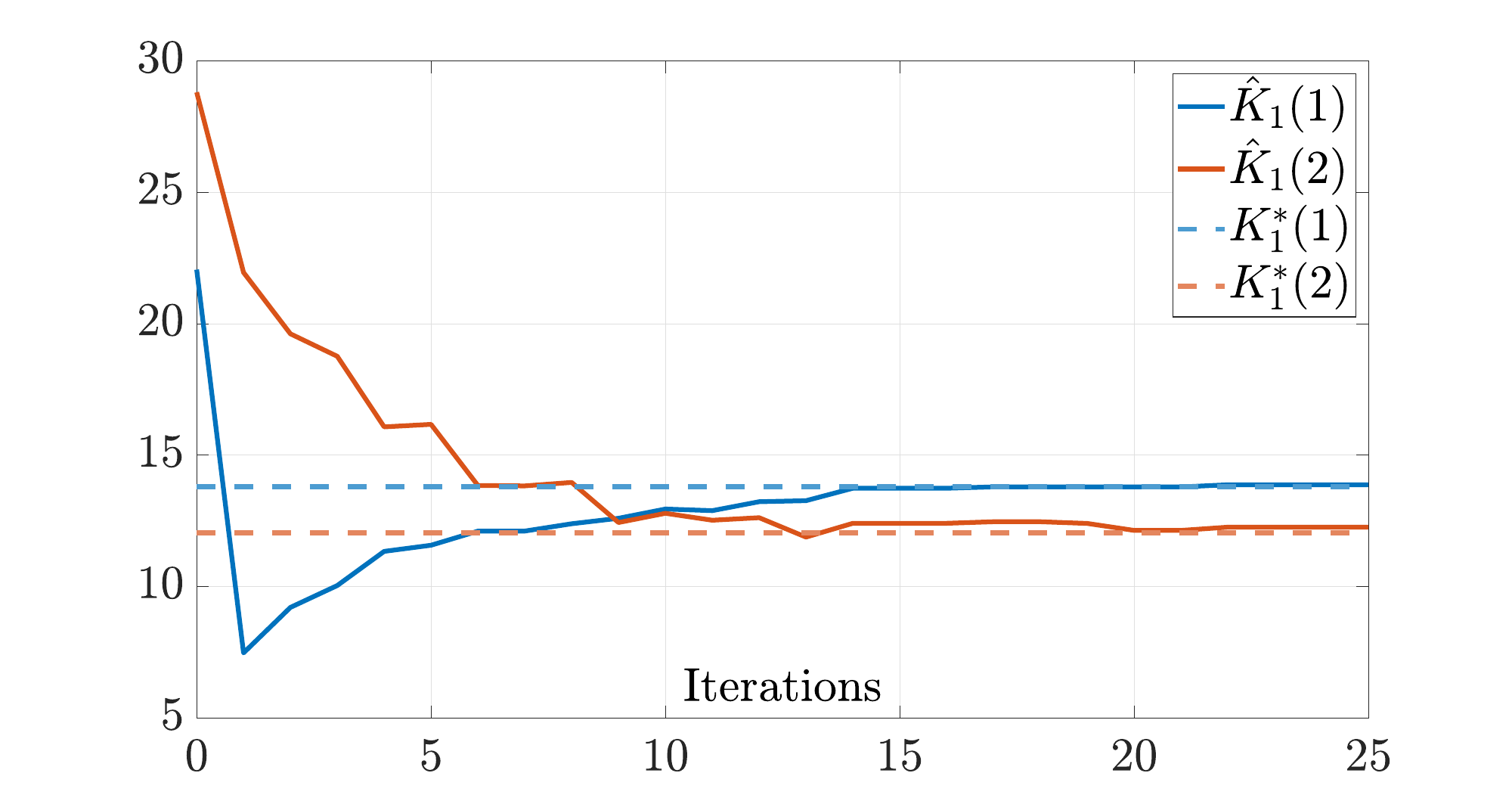}
        \vspace{-0.18cm}
        \caption{The solid lines represent the evolution of the controlled agent's feedback policy, \( \hat{\bm{K}}_1 \), across algorithm iterations, with the first entry shown in blue and the second in orange. The corresponding optimal Nash gain values are depicted by dashed lines.}
        \vspace{-0.4cm}
        \label{control_conv}
\end{figure}

\subsection{Robustness Properties} \vspace{-0.05cm}
We then turn to robustness. The following experiment demonstrates that the proposed approach preserves stability against all unfalsified adversary strategies, even in a low-data regime. To highlight this advantage, we compare it with a least-squares-based estimation strategy, which can be interpreted as an implementation of the method from \cite{LQ_games_learning}. This approach does not take uncertainty explicitly into account, instead providing a punctual solution based on the available dataset. We show that such an approach may yield unstable behavior under certain adversarial strategies, in contrast with the proposed robust solution.
Specifically, we limit learning to nine data samples and apply the resulting feedback strategies $\hat{K}_1^{robust}$ (robust solution) and $\hat{K}_1^{LS}$ (least-squares estimation solution), evaluating them against extremal unfalsified adversary strategies. Using the same example as before, we present the results in Fig. \ref{rob_comparison}, which depicts the evolution of the state variable $x_1(t)$ under extremal adversary strategies. The right plot highlights a failure case where the least-squares-based strategy leads to instability, while the left plot demonstrates that the robust solution enacted through our method stabilizes the system under all eight extremal adversary strategies.
Through the previous examples, we have demonstrated our method's effectiveness in stabilizing extremal adversary strategies while ensuring convergence to near-optimality given sufficient data.

\begin{figure}[t]
		\includegraphics[clip,trim=1cm 0.4cm 2.9cm 0.88cm,width=0.48\textwidth]
        {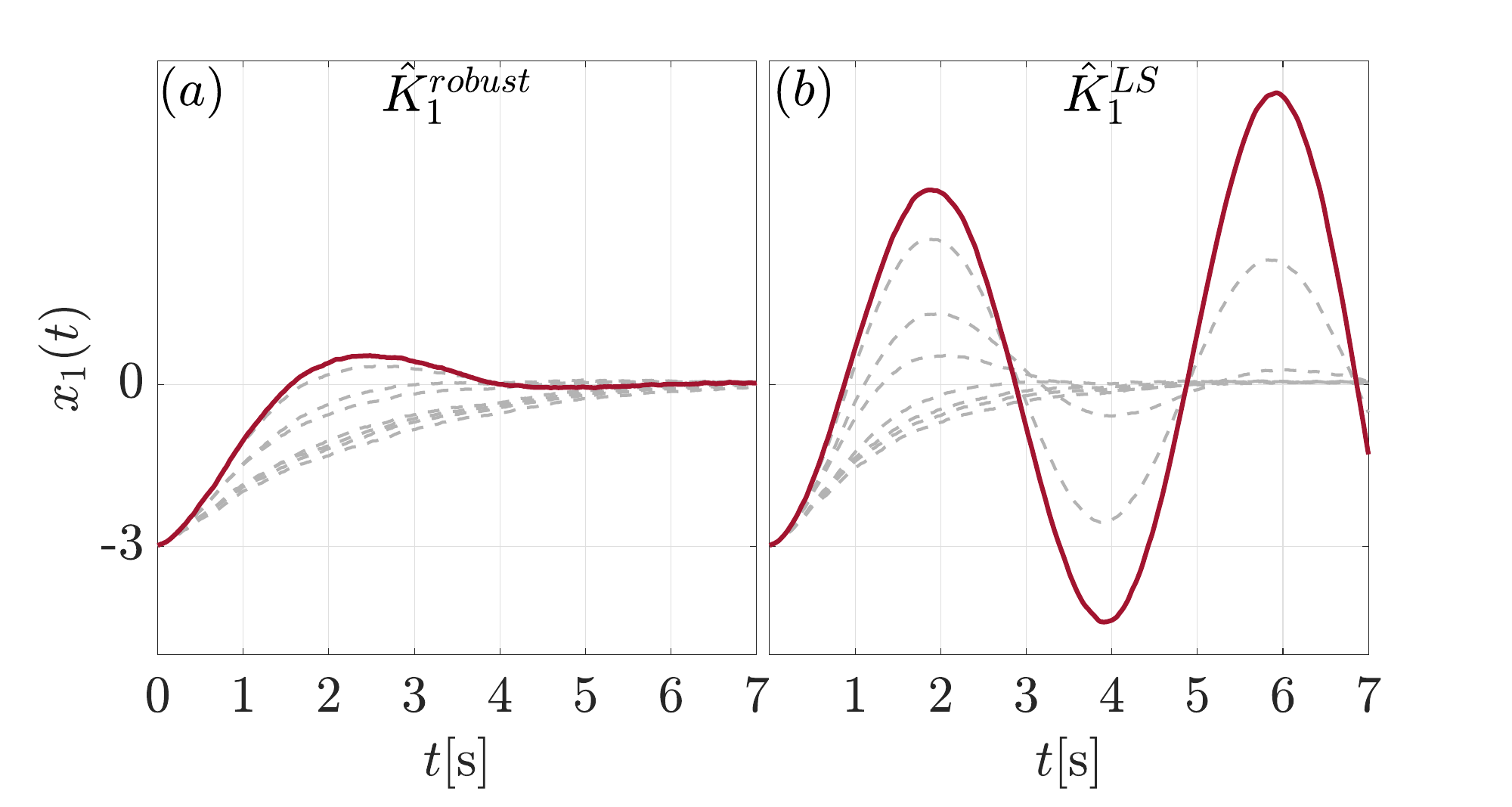}
        \vspace{-0.3cm}
        \caption{Comparison of $x_1(t)$ evolution for extremal adversary strategies. The left panel (a) shows the response under the robust solution $\hat{K}_1^{robust}$, while the right panel (b) shows the response under the least-squares estimation solution $\hat{K}_1^{LS}$. The adversarial strategy for which the least-squares-based method fails to guarantee stability is highlighted in red.}
        \vspace{-0.4cm}
        \label{rob_comparison}
\end{figure}

\subsection{Scalability} \vspace{-0.05cm}
We now demonstrate that the proposed algorithm scales effectively to higher-dimensional problems through the following numerical example. Consider the system dynamics with state and input matrices
\begin{align}
\label{sys2}
\bm{A} &=
\begin{bmatrix}
0 & 1 & 0\\
0 & 0 & 1\\
0.2 & -0.5 & 0.1
\end{bmatrix}, \quad
\bm{B}_1 =
\begin{bmatrix}
0.7 \\
0.7 \\
0.9
\end{bmatrix}, \quad
\bm{B}_2 =
\begin{bmatrix}
0.3 \\
0.4 \\
0.8
\end{bmatrix}.
\end{align}
In this setting, the number of uncertain parameters increases to nine, corresponding to all entries of $\bm{B}_2 \bm{K}_2$, in contrast with only two unknown parameters in the previous example. To represent this uncertainty, each entry of $\bm{B}_2 \bm{K}_2$ 
is assumed to lie within the range $[\bm{0}, 2 \bm{B}_2 \bm{K}_2]$. The system is also subject to an exogenous disturbance, modeled as a uniform noise vector with bounds 
$|w_i| \leq 0.25$. Additionally, the opponent's adaptation dynamics are unknown and set as 
$\tilde{u}(t) = 0.5 \cos(2 \pi t) e^{-0.4 t}$, capturing a decaying oscillatory behavior. Despite the increase in dimensionality, the proposed algorithm remains effective: the controlled agent's policy converges to a neighborhood of the Nash equilibrium within a finite number of iterations, as illustrated in Fig.~\ref{sim2}.

\begin{figure}[t]
		\includegraphics[clip,trim=3.1cm 0.75cm 2.7cm 0.9cm,width=0.48\textwidth]
        {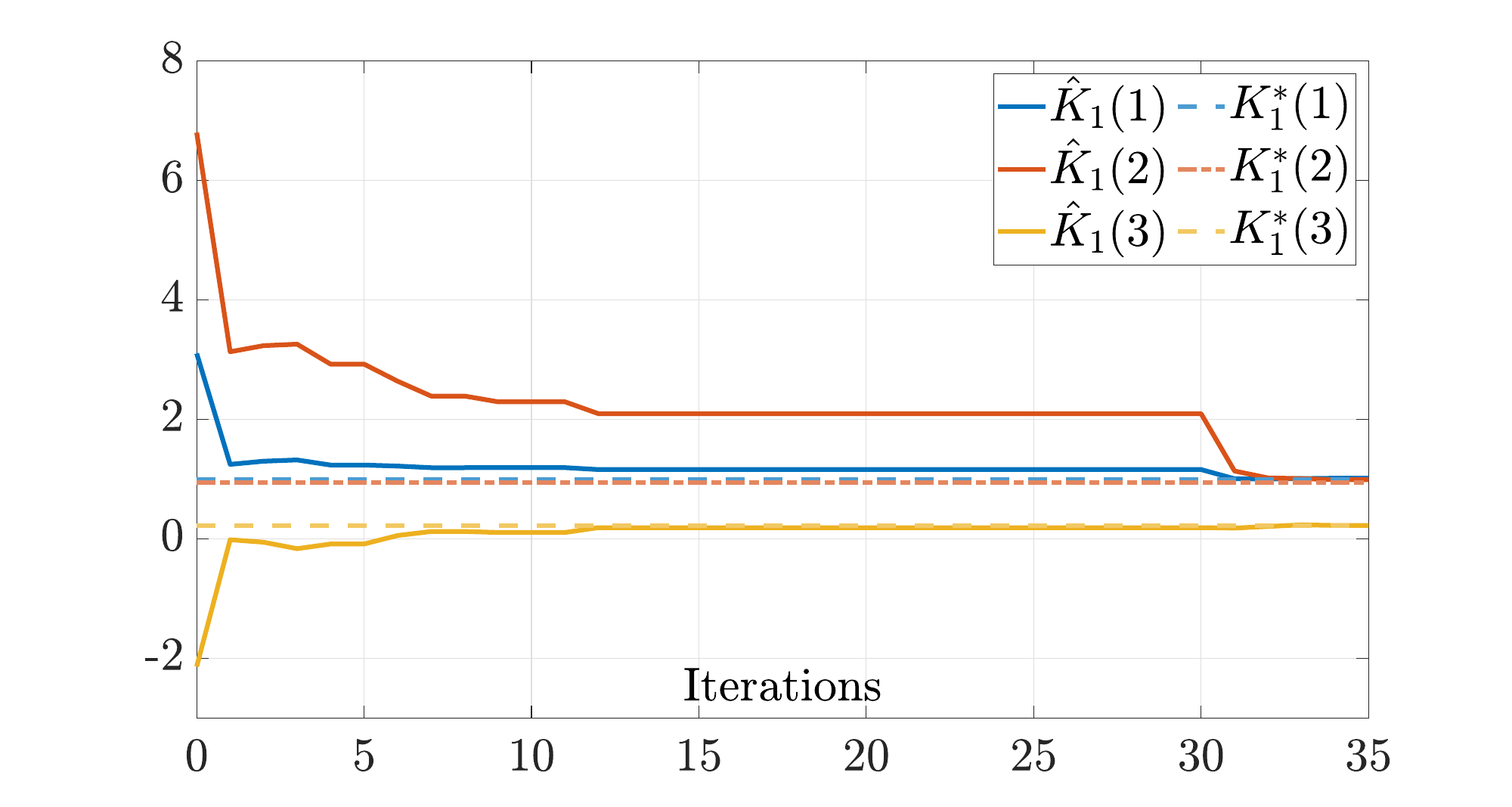}
        \vspace{-0.2cm}
        \caption{The solid lines represent the evolution of the controlled agent's feedback policy, \( \hat{\bm{K}}_1 \), across algorithm iterations, for system (\ref{sys2}). The corresponding optimal Nash gain values are depicted by dashed lines.}
        \vspace{-0.4cm}
        \label{sim2}
\end{figure}

\section{Conclusions} \vspace{-0cm}
This paper proposes a learning-based control approach for linear-quadratic games that ensures robustness against both an unknown adversary strategy and external disturbances. Using set-membership methods, the adversary’s strategy is iteratively estimated while refining the consistent strategy set under bounded uncertainty. To mitigate adversary-induced uncertainty, we incorporate a robust linear quadratic regulator via linear matrix inequalities, ensuring convergence to an $\epsilon$-Nash equilibrium.  
Our framework guarantees robustness and convergence without requiring strong assumptions on adversary adaptation. The method enables online strategy adaptation in competitive, uncertain environments. Numerical simulations demonstrate its effectiveness in interactive decision-making scenarios, including human-robot interaction and multi-agent control. Future work will focus on extending the framework to nonlinear games and exploring probabilistic estimation methods to reduce conservativeness while preserving robustness.

\section*{Acknowledgment} \vspace{-0cm}
This work was supported by the Innovation Network eXprt of the Technical University of Munich, funded by the Federal Ministry of Education and Research (BMBF) and the Free State of Bavaria under the Excellence Strategy of the Federal Government and the Länder, and by the European Research Council (ERC) Consolidator Grant "Safe data-driven control for human-centric systems (CO-MAN)" under grant agreement number 864686. \\ The authors acknowledge the use of ChatGPT \cite{openai2025} for grammar correction and rephrasing of the manuscript text.

\bibliographystyle{unsrt}
\bibliography{ms.bib}

\end{document}